\pgfplotsset{compat=newest}
\pgfplotsset{plot coordinates/math parser=false}
\newlength\figureheight
\newlength\figurewidth
\newcounter{teocount}
\newcounter{propcount}
\newcounter{remcount}
\newcounter{defcount}
\newtheorem{remm}[remcount]{Remark}
\newtheorem{definition}[defcount]{Definition}
\newtheorem{proposition}[propcount]{Proposition}
\newtheorem{theorem}[teocount]{Theorem}
\newenvironment{remark}{\begin{remm}\rm }{\hfill \hspace*{1pt} \hfill
$\star$\end{remm}}
\tikzstyle{block} = [draw, fill=white, rectangle, 
\tikzstyle{sum} = [draw, fill=white, circle, node distance=1cm]
\tikzstyle{input} = [coordinate]
\tikzstyle{output} = [coordinate]
\tikzstyle{pinstyle} = [pin edge={to-,thin,black}]
\newcounter{example}[section]
\title{\huge \bf
Regional Stability Analysis of Transitional Fluid Flows
}
\author{ Leonardo F. Toso, Ross Drummond and Stephen R. Duncan
\thanks{}
\thanks{
        {\small Leonardo F. Toso, Ross Drummond and Stephen R. Duncan are with the Department of Engineering Science, University of Oxford, Oxford OX1 3PJ, United Kingdom. Email: \texttt{\{leonardo.toso, ross.drummond, stephen.duncan\}@eng.ox.ac.uk}.}}%
}
\begin{document}

\maketitle
\thispagestyle{empty}
\pagestyle{empty}

%%%%%%%%%%%%%%%%%%%%%%%%%%%%%%%%%%%%%%%%%%%%%%%%%%%%%%%%%%%%%%
\begin{abstract}

A method to bound the maximum energy perturbation for which regional stability of transitional fluid flow models can be guaranteed is introduced. The proposed method exploits the fact that the fluid model's nonlinearities are both lossless and locally bounded and uses the axes lengths of the ellipsoids for the trajectory set containment as variables in the stability conditions. Compared to existing approaches, the proposed method leads to an average increase in the maximum allowable energy perturbation of   $\approx 29\%$ for the Waleffe-Kim-Hamilton (WKH) shear flow model and of $\approx 38\%$ for the 9-state reduced model of Couette flow.

Index Terms— Fluid flows, regional stability analysis, semidefinite programming. 

\end{abstract}

%%%%%%%%%%%%%%%%%%%%%%%%%%%%%%%%%%%%%%%%%%%%%%%%%%%%%%%%%%%%%%
\section{Introduction}
\label{sec:introduction}

Determining the stability properties of fluid flows remains a longstanding open problem tracing its roots back to Osborne Reynolds' 1883 experiments on the transition to turbulence in pipe flow  \cite{Reynolds}. The issues faced in predicting fluid stability are widely believed to be a result of the complex nature of the Navier-Stokes equations, which has forced practitioners to  either solve these equations numerically using computational fluid dynamics (CFD) or adapt experimental results to predict a fluid's response. Both of these methods have their limitations; CFD simulations are computationally demanding and require expertise to run, while experimental results can be expensive and are also typically designed for  demonstrative situations that may not generalise well to the flows found in practice. As a result, the design of many fluid-based technologies remain based upon significant experimental know-how and large computation power, an expensive and non-scalable situation. 

The limitations of CFD simulations and experimental characterisations have led to the development of several reduced-order fluid models for particular flows, which been shown to, at least qualitatively, give an indication of flow stability, while being significantly simpler to resolve than the general Navier-Stokes equations. Examples include the 4-state Waleffe-Kim-Hamilton (WKH) model \cite{Waleffe,Waleffe_Fabian} for shear flow and the 9-state reduced-order model \cite{Moehlis} for Couette flow bounded by two plates. The apparent success of these reduced-order fluid models in capturing the main features of the flow, has prompted research into their stability analysis, with the long term goal of this line of research being to generalise of the lessons learned from these simpler systems to develop scalable and non-conservative techniques for the analysis of more complex fluid models.

However, the stability analysis of even these reduced order models can still be challenging, owing to their nonlinear dynamics and non-normal state-transition matrices \cite{trefethen1, trefethen2}. Progress has been made, notably in \cite{Seiler_1, Seiler_2, Dennice}, where it was observed that the nonlinear model dynamics exhibit some structure that can be exploited. In particular, these results observed that the model dynamics could be understood in terms of the feedback interconnection of a linear system with an energy persevering or lossless nonlinear gain, as shown in Figure \ref{fig:lure_decomposition}, allowing the powerful and scalable techniques of passive systems theory (\cite[Chapter VI]{Khalil}) to be applied. However, stability certificates based upon passive systems theory have been found to be conservative and the maximum energy flow perturbation for which stability can be certified is significantly lower than that predicted by simulation (see Section \ref{sec:numerical_examples}). Reducing this conservatism will be necessary if these methods are to be deployed in practical applications involving more complex fluid flows. 

\emph{Contribution}: Motivated by the passive systems theory results \cite{Seiler_1, Seiler_2, Dennice}, this paper extends the approach by introducing an algorithm that allows the axes lengths of the ellipsoids bounding the state trajectories to be defined as matrix variables to be optimised over. To evaluate the performance of the method, the obtained stability conditions were  applied to both the 4-state WKH model for shear flow and the 9-state model of Couette flow and demonstrated a  reduction in conservatism compared to \cite{Seiler_1, Seiler_2, Dennice} (see Section~\ref{sec:numerical_examples}). These results indicate the potential of the proposed approach for analysing the stability of more complex flows where the increase in system dimension and/or complexity would makes it computationally impracticable to apply existing non-conservative methods, such as nonlinear direct-adjoint looping (DAL) \cite{Kerswell} or sum-of-squares programming \cite{Goulart,Valmorbida2}.

\emph{Paper structure:} The paper is structured as follows.  Section~\ref{sec:transitional_flow_models} introduces the two transitional fluid flow models and Section~\ref{sec:feebback_interconnection_of_the_models} consideres these models as the feedback interconnection of a linear system with a lossless nonlinearity, which allows  passive systems theory to be applied to their analysis. Sections~\ref{sec:local_quadratic_bounds} and \ref{sec:computation_of_ellipsiods_for_the_set_containmentt} formulate bounds for the nonlinear flow interactions by exploiting the fact that the system's state trajectories can be bounded within a ellipsoidal region. By exploiting these local quadratic bounds, a method to certify the regional stability of these fluid models is developed in Section~\ref{sec:regional_stability_analysis}, with the main result presented in Theorem \ref{theorem:local_state_bounds}. Numerical results  estimating the region of attraction of the Waleffe-Kim-Hamilton (WKH) model and the 9-state reduced-order model of Couette flow are described in Section~\ref{sec:numerical_examples} to illustrate the potential of the approach.

\emph{Notation:} The identity matrix of dimension $n$ is $I_n$ and the the matrix of zeros of dimension $n \times m$ is $0_{n \times m}$. If a matrix $A$ is positive definite then $A \in \mathcal{S}^n_{\succ 0}$ and if it is a diagonal matrix with positive diagonal elements then $A \in \mathbb{D}^n_{++}$. The $\rho$-level sets of a function $V(x)$ are defined as $\mathcal{E}(V,\rho):= \{x: V(x) = \rho\}$.

%%%%%%%%%%%%%%%%%%%%%%%%%%%%%%%%%%%%%%%%%%%%%%%%%%%%%%%%%%%%%%

\section{Transitional Fluid Flow Models}
\label{sec:transitional_flow_models}

The transitional fluid flow models considered in this paper are derived from direct numerical simulations (DNS) of plane Couette flow \cite{Waleffe_Fabian}. Both the 4-state and 9-state reduced-order models are described by ordinary differential equations (ODEs) and are derived from simplifications of the Navier-Stokes equations.

\subsection{General form}

%Nonlinear Systems, Lur'e decomposition, Lossless property, ROA estimation 

Both of the considered transitional fluid flows models can be expressed in the general form
\begin{equation}
    \dot{x}(t)=Ax(t) + \phi(x)
    \label{nonlinear_system}
\end{equation}
where $x \in \mathbb{R}^n$ is the system's state, $A \in \mathbb{R}^{n\times n}$ is the Hurwitz state transition matrix depending on the Reynolds number ($Re$), and $\phi(x)=\mathcal{F}x \in \mathbb{R}^n \to\mathbb{R}^n $ describes the nonlinear interactions of the fluid flow. The nonlinearity $\phi(\cdot)$ can be expressed in terms of a quadratic form
\begin{equation}
    \phi(x)=\begin{bmatrix}
                x^\top S_1 x\\
                \vdots\\
                x^\top S_n x
    \end{bmatrix}
\end{equation}
with $S_1,\ldots,S_n$ being symmetric matrices.

\subsection{Waleffe-Kim-Hamilton (WKH) shear flow model}

With the WKH model, the behaviour of a shear flow bounded by two plates, one moving and the other stationary, is described by
\begin{align}
\small
\begin{bmatrix} \dot{x}_1 \\ \dot{x}_2 \\ \dot{x}_3 \\ \dot{x}_4 \end{bmatrix} 
=
\frac{1}{Re}\begin{bmatrix} -\lambda  & 1 & 0 & 0 \\ 0 & -\mu & 0 & 0\\ 0 & 0 & -\nu & 0 \\ 0 & 0 & 0& -\sigma \end{bmatrix}
\begin{bmatrix} {x_1} \\ {x_2} \\ {x_3} \\ {x_4} \end{bmatrix}
+
\begin{bmatrix} x_2x_4-\gamma x_3^2 \ \\ \delta x_3^2 \\ \gamma x_3x_1 -\delta x_3x_2 \\ -x_2x_1 \end{bmatrix}
    \label{eq:WKH_model}
\end{align}
\noindent where $x_1$, $x_2$, $x_3$ and $x_4$ represent the amplitude of the stream-wise velocity, rolls (that consists mostly of vertical velocity), inflectional streak instability and mean shear, respectively. The positive constants $\lambda$, $\mu$, $\sigma $ and $\nu$ concern the viscous decay rates, whereas $\gamma$ and $\delta$ are positive nonlinear interaction coefficients.

\subsection{9-state reduced model of Couette flow}

The 9-state reduced-order model of Couette flow is a low-dimensional model for turbulent shear flows generalising the eight-mode model of \cite{Fabian_Waleffe} to capture variations in the main fluid velocity profile during the transition from laminar to turbulent states. The nine ordinary differential equations of the model are detailed in Appendix A and are obtained by applying a Galerkin projection \cite{Holmes} on the mode profiles over the spatial domain $0\leq b\leq L_b$, $-1\leq c\leq 1 $ and $0\leq d \leq L_d$, where $a,\,b$ and $c$ relate to the downstream, shear and spanwise spatial directions, respectively.

%----------------------------------------------------------------------------------
\section{Feedback Interconnection of the models}
\label{sec:feebback_interconnection_of_the_models}

Accounting for the nonlinear terms $\phi(\cdot)$ in \eqref{nonlinear_system} is the main source of difficulty in the stability analysis of these models.  However, for the transitional flow models considered here, these nonlinear terms exhibit properties that can be exploited. Specifically, well-established theory on the incompressible Navier-Stokes equations \cite{Schmid} (highlighted through the Leray formulation \cite{Leray}) means that for many wall-bounded transitional fluid flow models, including \cite{Waleffe} and \cite{Moehlis}, the nonlinearity $\phi(x)$ is \emph{memoryless}, meaning that the mapping $\mathcal{F}: \mathbb{R}^n \to \mathbb{R}^n$ does not vary with time, satisfies $\phi(0)=0$, and is \emph{lossless}.

\begin{definition}
(Losslessness \cite{Khalil}) A nonlinear real function $\phi(x)=\mathcal{F}x$, with $\mathcal{F}: \mathbb{R}^n \to \mathbb{R}^n$, is said to be lossless if the following condition is verified:
\begin{equation}
    x^\top \phi(x)=0, \quad \forall x \in \mathbb{R}^n.
\end{equation}
\label{def:losslessness}
\end{definition}
Losslessness of $\phi$ can also be encoded in a matrix form
\begin{equation}
     \eta^\top \underbrace{\begin{bmatrix} 0_{n\times n} & I_{n}\\
     I_{n} & 0_{n\times n}
     \end{bmatrix}}_{F_0} \eta = 0, \quad  \forall x \in \mathbb{R}^n,
    \label{eq:local_quadratic_constraint_1}
\end{equation}
\noindent with $\eta^\top = [x^\top,\;\ \phi(x)^\top]$, since $x^\top \phi(x)=\phi^\top (x) x= 0$.

 By introducing the additional variable $v$,  the model dynamics \eqref{nonlinear_system} can be equivalently written as
\begin{subequations}
\begin{align}
    \dot{x}&=Ax + v,\\
    v&=\phi(x).
\end{align}
\end{subequations}

As illustrated in Figure \ref{fig:lure_decomposition}, this system can be understood as the feedback interconnection of a linear system with transfer function $(sI-A)^{-1}$ mapping $v \to x$ where $v$ is the output from mapping the state $x$ through the nonlinear, but  lossless, gain $\phi(\cdot)$. The losslessness property of $\phi(\cdot)$ means that the system's stability can be inferred using passive systems theory \cite{Khalil}, which directly exploits this feedback based perspective.

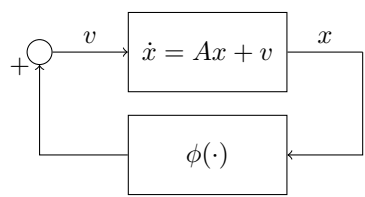
\begin{figure}[h!]
\centering
   \begin{tikzpicture}[auto, node distance=1cm]
    \node [sum, ] (sum) {};
    \node [block, right = 1cm of sum] (system) {$\dot{x}=Ax + v$};
    \node [output, right= 1cm of system] (output) {};
    \node [block, below = 0.3cm of system] (nonlinearity) {$\phi(\cdot)$};
    \draw [->] (sum) -- node {$v$} (system);
    \draw [-] (system) -- node {$x$} (output);
    \draw [->] (output) |- (nonlinearity);
    \draw [->] (nonlinearity) -| node[pos=0.99] {$+$} 
        node [near end] {} (sum);
\end{tikzpicture} 
\caption{\textit{Feedback interpretation of  \eqref{nonlinear_system} in terms of a linear system with a lossless nonlinearity.}}
\label{fig:lure_decomposition}
\end{figure}

\section{Local quadratic bounds for $\phi$}
\label{sec:local_quadratic_bounds}

Globally, when considering $x\in \mathbb{R}^n$, losslessness is one of the only properties satisfied by $\phi(x)$, however,  when considering a regional analysis with $x\in \mathcal{E} \subset \mathbb{R}^n$, the nonlinearity $\phi(x)$ can be locally bounded. The benefits of introducing local bounds for $\phi(\cdot)$  was identified in \cite{Seiler_2}, with the bounds obtained from the Cauchy-Schwartz Lemma allowing the conservatism of their stability certificates to be reduced compared to the earlier results of \cite{Seiler_1} and \cite{Dennice}. However, in general, bounds produced by the Cauchy-Schwartz Lemma are not tight, which suggests that there may be room to reduce this conservatism still further. Here, it is shown how by  inspecting the various terms of the nonlinear term directly and bounding them, additional quadratic bounds for products of the nonlinear terms can be specified.

\subsection{Local quadratic bounds}
For the WKH model of \eqref{eq:WKH_model}, assume that the states are bounded by
\begin{align}
    x_1^2 \leq {\gamma_{x_1,k}},~x_2^2 \leq {\gamma_{x_2,k}},~ x_3^2 \leq {\gamma_{x_3,k}},~ x_4^2 \leq {\gamma_{x_4,k}}
\end{align}
for $k = 1, 2, \, \dots, K$ and define
\begin{align}
    & \hat{\gamma}_{x_1,k} = {\lambda_k \gamma_{x_1,k}},\quad
     \hat{\gamma}_{x_2,k} = {\lambda_k\gamma_{x_2,k} }, \\
  & \hat{\gamma}_{x_3,k} = {\lambda_k\gamma_{x_3,k}},\quad
       \hat{\gamma}_{x_4,k} = \lambda_k \gamma_{x_4,k}.\nonumber
\end{align}

Following some manipulation, the WKH model's nonlinear terms can be, for instance, bounded by
\begin{subequations}
\small
\begin{align}
\lambda_1\phi_2(x)^2 &\leq x_3^2 {\hat{\gamma}_{x_3,1}}, \\
\lambda_2\phi_3(x)^2&\leq {\hat{\gamma}_{x_3,2}}(x_1 - x_2)^2,\\
\lambda_3\phi_4(x)^2 &\leq {\hat{\gamma}_{x_2,3}} x_1^2,\\
\lambda_4\phi_4(x)^2 &\leq {\hat{\gamma}_{x_1,4}} x_2^2,\\
\lambda_5\phi_3(x)^2 &\leq {\hat{\gamma}_{x_3,5}}(x_1^2+(x_1-x_2)^2 + (\delta v)^2),\\
\lambda_6\phi_1(x)\phi_4(x) &\leq \frac{1}{2} {\hat{\gamma}_{x_3,6}}(x_2+x_1)^2 + \frac{1}{2}\hat{\gamma}_{x_2,6}(x_4-x_1)^2,\\
\lambda_7\phi_2(x)\phi_4(x) &\leq \frac{1}{2} \hat{\gamma}_{x_3,7}(x_2-x_1)^2,\\
\lambda_8\phi_3(x)\phi_4(x) &\leq \frac{1}{2} \hat{\gamma}_{x_1,8}(x_3-x_2)^2 +\frac{1}{2} \hat{\gamma}_{x_2,8} (x_3+x_1)^2 .
\end{align}
\end{subequations}

Using the same approach, other bounds for $\phi$ can be generated and the non-linearities of the 9-state Couette flow model can also be similarly bounded. 

%The idea behind these local bounds is illustrated in Figure \ref{fig:ellipsoidal_region}.

%\begin{figure}[h!]
%      \centering
%      \includegraphics[scale=0.4]{}
%      \caption{\textit{3-D illustration of the local region $\mathcal{E}$ containing the inner set $\mathcal{R}=\{x \in \mathbb{R}^n: x^\top x \leq R^2\}$.}}
%      \label{fig:ellipsoidal_region}
% \end{figure}

\subsection{Matrix Inequalities}

The local quadratic bounds of each models' nonlinearities (defined in the previous sub-section with $K$ being the total number of bounds obtained) can be expressed in a matrix form
\begin{equation}
    \eta^\top \underbrace{\begin{bmatrix} \hat{\Gamma}_k(\hat{\gamma}_{x_j,k}) & 0_{n\times n}\\
                      0_{n\times n}      & -\lambda_kM_k
      \end{bmatrix}}_{F_k(\hat{\Gamma}_{k},\,\lambda_k)}\eta \geq 0, ~\forall x \in \mathcal{E},
      \label{eq:local_quadratic_bounds_2}
\end{equation}

\noindent with $M_k$ being a symmetric  matrix corresponding to the $k^\text{th}$ bound, with $k = 1,\, \dots, \, K$. By incorporating these local bounds using  the \emph{S-procedure} \cite{Aizerman},  local information on $\phi(\cdot)$ can be included within the stability analysis of Theorem~\ref{theorem:local_state_bounds}, helping to reduce the conservatism.

As described in \cite[Lemma 1]{Seiler_1}, the nonlinearity $\phi(x)$ can also be bounded by
\begin{align}
    \eta^\top \underbrace{\begin{bmatrix}  S_i \Delta_i S_i & 0_{n\times n}\\
                       0_{n\times n}      & -\lambda_{K+i}e_i^Te_i
      \end{bmatrix}}_{M_i(\Delta_i,\,\lambda_{K+i})}\eta \geq 0, \forall x \in \mathcal{G}_{\xi}
      \label{eq:local_quadratic_bounds_3}
\end{align}
for all $i=1,\ldots,n$, where $\mathcal{G}_{\xi}=\{x \in \mathbb{R}^n: x(t)^\top G x(t) \leq \xi^2\}$ defining an ellipsoid over $\mathbb{R}^n$ and with $e_i \in \mathbb{R}^n$ being the $i^\text{th}$ standard basis vector. Contrasting with the formulation of \cite{Seiler_2}, in this paper, the matrices $\Delta_i=\lambda_i\xi^2G^{-1}$ are defined as matrix decision  variables in the optimisation problem of the stability conditions, instead of being fixed at each iteration.

%--------------------------------------------------------------------------------------------------------------------------
\section{Ellipsoids for the set containment}
\label{sec:computation_of_ellipsiods_for_the_set_containmentt}

For the local quadratic bounds on the nonlinear terms to hold, the state trajectories must be constrained to the local region $x \in \mathcal{E}$ for all  initial conditions considered. The following proposition allows the ellipsoidal sets for this set containment to be posed in terms of linear matrix inequalities.

\begin{proposition}\label{prop:set_cont}
 Consider a Lyapunov function $V(x): \mathbb{R}^n \to \mathbb{R}_+ = x(t)^TPx(t)$ with $P \in \mathbb{S}_{\succ 0}^n$. For $k = 1, \,2, \, \dots,\,K$ with $K$ being the total number of bounds for $\phi(x)$ (Section \ref{sec:local_quadratic_bounds}). Define  ellipses $E_k(x) = x^T\Lambda_k^{-1} x = \sum^{n}_{j = 1}\gamma_{x_j,k}^{-1}{x_j}^2$ where $\gamma_{x_j,k} >0 $,  and the matrix $\hat{\Lambda}_k^{-1} = \frac{1}{\lambda_k}\Lambda_k^{-1}$.

 If  
\begin{subequations}\label{matrix_conditions}\begin{align} \label{level_sets}
 \begin{bmatrix}P & \bar{\lambda}_k^{1/2}I_n \\ \bar{\lambda}_k^{1/2}I_n & {{\hat{\Lambda}}_k}\end{bmatrix}
 \succ 0,
 \\
 \bar{\lambda}_k\geq \lambda_k >0, \label{lambda_dom}
\end{align}
\label{set_conditions}\end{subequations}
 then $\mathcal{E}(V,1) \subseteq \mathcal{E}(E_k,1)$ where $\mathcal{E}(V,1):=\{x \in \mathbb{R}^n: V(x)\leq 1\}$ and $\mathcal{E}(E_k,1):=\{x \in \mathbb{R}^n: x^\top\Lambda_k x\leq 1\}$.
\label{prep:local_quadratic_bounds}
\end{proposition}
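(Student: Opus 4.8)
The plan is to reduce the set containment $\mathcal{E}(V,1) \subseteq \mathcal{E}(E_k,1)$ to a single matrix inequality between $P$ and $\Lambda_k^{-1}$, and then to extract that inequality from \eqref{matrix_conditions} by a Schur-complement argument. First I would record the elementary fact about nested origin-centred ellipsoids: if $\Lambda_k^{-1} \preceq P$, then for every $x$ with $x^\top P x \le 1$ one has $E_k(x) = x^\top \Lambda_k^{-1} x \le x^\top P x \le 1$, so $\{x : x^\top P x \le 1\} \subseteq \{x : E_k(x) \le 1\}$ (reading $\mathcal{E}(E_k,1)$ as the unit sub-level set of $E_k(x) = x^\top\Lambda_k^{-1}x = \sum_j \gamma_{x_j,k}^{-1}x_j^2$). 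Hence it suffices to show that \eqref{matrix_conditions} implies $\Lambda_k^{-1} \preceq P$, in fact strictly.

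Next I would apply the Schur complement to \eqref{level_sets}. Since a principal submatrix of a positive definite matrix is itself positive definite, the block LMI forces $\hat\Lambda_k \succ 0$, so the Schur complement with respect to the $(2,2)$ block is well defined, and \eqref{level_sets} is equivalent to
$P - (\bar\lambda_k^{1/2} I_n)^\top \hat\Lambda_k^{-1} (\bar\lambda_k^{1/2} I_n) \succ 0$, i.e. $P - \bar\lambda_k\,\hat\Lambda_k^{-1} \succ 0$. Substituting the defining relation $\hat\Lambda_k^{-1} = \tfrac{1}{\lambda_k}\Lambda_k^{-1}$ gives $P \succ \tfrac{\bar\lambda_k}{\lambda_k}\,\Lambda_k^{-1}$.

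Finally I would invoke \eqref{lambda_dom}: since $\bar\lambda_k \ge \lambda_k > 0$, we have $\tfrac{\bar\lambda_k}{\lambda_k} \ge 1$, and because $\Lambda_k$ (being diagonal with entries $\gamma_{x_j,k}>0$) satisfies $\Lambda_k^{-1} \succ 0$, this yields $\tfrac{\bar\lambda_k}{\lambda_k}\Lambda_k^{-1} \succeq \Lambda_k^{-1}$. Chaining the inequalities gives $P \succ \Lambda_k^{-1}$, which by the first paragraph establishes $\mathcal{E}(V,1) \subseteq \mathcal{E}(E_k,1)$; in particular $\sum_j \gamma_{x_j,k}^{-1} x_j^2 \le 1$ on $\mathcal{E}(V,1)$, hence $x_j^2 \le \gamma_{x_j,k}$ for every $j$, which is the form in which the local bounds of Section~\ref{sec:local_quadratic_bounds} are used.

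There is no serious obstacle here: the argument is a Schur complement followed by a monotonicity step. The only things requiring care are (i) tracking the direction and strictness of the matrix inequalities through the Schur complement and the scaling by $\bar\lambda_k/\lambda_k$, and (ii) the bookkeeping between $\Lambda_k$, $\Lambda_k^{-1}$ and $\hat\Lambda_k$ — the slack variable $\bar\lambda_k$ enters only so that $\lambda_k$, which also appears in the bound matrices $F_k$ of \eqref{eq:local_quadratic_bounds_2}, need not be fixed when \eqref{level_sets} is imposed, and it plays no role beyond the scalar inequality $\bar\lambda_k/\lambda_k \ge 1$.
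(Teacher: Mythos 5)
Your proposal is correct and follows essentially the same route as the paper: Schur complement of \eqref{level_sets} to obtain $P - \tfrac{\bar{\lambda}_k}{\lambda_k}\Lambda_k^{-1} \succ 0$, the scalar monotonicity step from \eqref{lambda_dom} to get $P \succ \Lambda_k^{-1}$, and the standard nested-ellipsoid conclusion. Your added remarks (justifying $\hat{\Lambda}_k \succ 0$ via the principal-submatrix argument, and reading $\mathcal{E}(E_k,1)$ as the unit sublevel set of $E_k(x)=x^\top\Lambda_k^{-1}x$ despite the paper's inconsistent $x^\top\Lambda_k x$ in the statement) are sensible clarifications rather than a different argument.
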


%-------------------------------------------------
\begin{proof}
From the Schur complement, \eqref{level_sets} is equivalent to
\begin{align}
    P- {\bar{\lambda}_k}^{1/2}I_n\hat{\Lambda}_k^{-1}I_n{\bar{\lambda}_k}^{1/2}
    =
    P- \frac{\bar{\lambda}_k}{ \lambda_k }{\Lambda_k^{-1}}\succ 0.\label{P_cond1}
\end{align}
Since $\bar{\lambda}_k \geq {\lambda}_k >0$, then \eqref{P_cond1} implies
$
    P- {\Lambda_k}^{-1}\succ 0.
$
Multiplying this matrix inequality on the left by $x^\top$ and on the right by $x$ gives
\begin{align}
    E_k(x)\leq V(x).
\end{align}
\noindent We then have the set containment $\mathcal{E}(V,1) \subseteq \mathcal{E}(E_k,1)$.
\end{proof}

%---------------------------------------------------------------------------------------------------------------------------------------------
\section{Regional stability of transitional flow models}
\label{sec:regional_stability_analysis}

Conditions to estimate the regional stability analysis can be formulated using the set containment of Proposition \ref{prop:set_cont}. In keeping with recent results, e.g. \cite{Seiler_2}, these conditions are posed by computing inner estimates of the maximum energy perturbation for which asymptotic stability can be guaranteed.

\begin{theorem}
Consider the system \eqref{nonlinear_system}. For  given $\epsilon>0$ and ${\bar{\lambda}} \in \mathbb{R}^{n+K}_{\geq 0}$, if there exists positive-definite matrices $P~\in~\mathbb{S}^{n}_{\succ 0}$,  ${\Delta}_i$ $ \in \mathbb{D}^n_{++}$, $\hat{\Gamma}_k$ $ \in \mathbb{D}^n_{++}$ and Lagrange multipliers $\zeta_0 \in \mathbb{R}$, $\lambda_i \in \mathbb{R}^{n+K}_{\geq 0}$ that solves

\begin{subequations}\begin{align}
\beta^{*}= \frac{1}{(R^*)^2}=
 \underset{P,\,\beta,\, \Delta,\,\hat{\Lambda}(\hat{\gamma}_{x_j,k}),\,\lambda,\,\zeta_0}{\text{min}} \quad \beta = \frac{1}{R^2}
\end{align}
subject to 

\begin{align}
   & \begin{bmatrix}
  A^\top P + PA  & P\\
  P     &        0_{n\times n}
  \end{bmatrix} + \zeta_0F_0 + \sum_{k=1}^{K}F_k(\hat{\Gamma}_k,\lambda_k)\nonumber\\
  &\qquad \qquad + \sum_{i=1}^{n} M_i(\Delta_{i},\lambda_{K+i}) \preceq -\begin{bmatrix}
  \epsilon I_{n\times n} & 0_{n\times n}\\0_{n\times n} & 0_{n\times n}
  \end{bmatrix},\label{eq:theorem_local_state_bound_constraint_1}\\
  &\begin{bmatrix}P & \bar{\lambda}_k^{1/2}I_n \\ \bar{\lambda}_k^{1/2}I_n & \hat{\Lambda}_k(\hat{\gamma}_{x_j,k})\end{bmatrix}
 \succ 0, \;\ \forall k=1,\,\ldots,\,K, \label{eq:theorem_local_state_bound_constraint_2}
 \\
  &\begin{bmatrix}P & \bar{\lambda}_{K+i}^{1/2}I_n \\ \bar{\lambda}_{K+i}^{1/2}I_n & {\Delta}_i\end{bmatrix}
 \succ 0, \;\ \forall i=1,\,\ldots,\,n, \label{eq:theorem_local_state_bound_constraint_3}
 \\
 &0 \prec P \preceq \beta I_n \label{eq:theorem_local_state_bound_constraint_5}\\
&\bar{\lambda}_\ell\geq \lambda_\ell > 0,\;\ \forall \ell=1,\,\ldots,\,K+n\label{eq:theorem_local_state_bound_constraint_6}\\
&\hat{\Lambda}_k(\hat{\gamma}_{x_j,k})\succeq 0, \;\ \forall k=1,\,\ldots,\,K  ,
\label{eq:theorem_local_state_bound_constraint_7}\\
& \Delta_i\succeq 0, \;\ \forall i=1,\,\ldots,\,n , \label{eq:theorem_local_state_bound_constraint_8}
\end{align}
%\end{small} 
\end{subequations}
\noindent then the system is asymptotically stable for all initial conditions $x(0) \in \mathcal{R}:=\{x(0) \in \mathbb{R}^n: x(0)^\top x(0) \leq ({R^*)}^2\}$ with the trajectories satisfying the set containment  $x \in \mathcal{R} \subseteq \mathcal{E}(V,1) \subseteq \mathcal{E}(E_k,1)$. 

\label{theorem:local_state_bounds}

\begin{proof}
With the Lyapunov function $V(x(t)) = x(t)^\top Px(t)$, condition \eqref{eq:theorem_local_state_bound_constraint_1} implies $\dot{V}(x(t)) <0~\forall x \in  \mathcal{E}(E_k,1)$. It is then required to show that the state trajectories remain within $\mathcal{E}(E_k,1)$ at all times. For this, it is noted that \eqref{eq:theorem_local_state_bound_constraint_2} and \eqref{eq:theorem_local_state_bound_constraint_6} imply via Proposition \ref{prop:set_cont} that   $\mathcal{E}(V,1) \subseteq \mathcal{E}(E_k,1)$. From \eqref{eq:theorem_local_state_bound_constraint_5}, then $0 \leq V(x) \leq \beta x(t)^\top x(t)$ when $x(0)^\top x(0) \leq R^2$, then $V(x(0)) \leq 1$. Condition \eqref{eq:theorem_local_state_bound_constraint_1} means that the sublevel sets of $\mathcal{E}(V,1) $ are positive invariant, giving asymptotic stability.

\end{proof}
\end{theorem}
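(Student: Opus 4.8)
The plan is to use the quadratic Lyapunov candidate $V(x) = x^\top P x$ and establish three facts: (i) $\dot V$ is negative definite on the sublevel set $\mathcal{E}(V,1)$; (ii) $\mathcal{E}(V,1)$ is contained in every ellipsoid on which the local quadratic bounds of Section~\ref{sec:local_quadratic_bounds} are valid; and (iii) the ball $\mathcal{R}$ sits inside $\mathcal{E}(V,1)$. Combining these with the standard Lyapunov invariance argument then gives asymptotic stability from every $x(0)\in\mathcal{R}$ together with the stated set-containment chain.

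First I would compute, along the trajectories of \eqref{nonlinear_system}, that $\dot V(x) = 2x^\top P(Ax+\phi(x)) = \eta^\top\begin{bmatrix}A^\top P + PA & P\\ P & 0_{n\times n}\end{bmatrix}\eta$, where $\eta^\top = [x^\top,\,\phi(x)^\top]$. The losslessness identity \eqref{eq:local_quadratic_constraint_1} gives $\eta^\top F_0\eta = 0$ for every $x$, so the term $\zeta_0 F_0$ may be added with $\zeta_0$ of free sign without affecting the value of $\eta^\top(\cdot)\eta$ on the constraint manifold; and the inequalities \eqref{eq:local_quadratic_bounds_2}, \eqref{eq:local_quadratic_bounds_3} give $\eta^\top F_k(\hat\Gamma_k,\lambda_k)\eta \geq 0$ and $\eta^\top M_i(\Delta_i,\lambda_{K+i})\eta\geq 0$ whenever $x$ lies in the corresponding ellipsoids $\mathcal{E}(E_k,1)$ and $\mathcal{G}_\xi$. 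Pre- and post-multiplying \eqref{eq:theorem_local_state_bound_constraint_1} by $\eta^\top$ and $\eta$, and invoking the S-procedure with $\zeta_0\in\mathbb{R}$ and the nonnegative multipliers $\lambda_\ell\geq 0$, then yields $\dot V(x) \leq -\epsilon\,x^\top x$ for all $x$ in the intersection of those ellipsoids.

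Next I would invoke Proposition~\ref{prop:set_cont}: constraints \eqref{eq:theorem_local_state_bound_constraint_2} and \eqref{eq:theorem_local_state_bound_constraint_6} give $\mathcal{E}(V,1)\subseteq\mathcal{E}(E_k,1)$ for every $k$, and the analogous Schur-complement manipulation applied to \eqref{eq:theorem_local_state_bound_constraint_3} (with \eqref{eq:theorem_local_state_bound_constraint_6}) places $\mathcal{E}(V,1)$ inside $\mathcal{G}_\xi$ as well. Hence the bound of the previous paragraph holds on all of $\mathcal{E}(V,1)$, i.e.\ $\dot V(x)\leq -\epsilon\|x\|^2 < 0$ for every $x\in\mathcal{E}(V,1)\setminus\{0\}$. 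Since $P\succ 0$, the set $\mathcal{E}(V,1)$ is compact, $V$ is positive definite on it, and strict decrease of $V$ on its boundary makes $\mathcal{E}(V,1)$ positively invariant; Lyapunov's theorem then gives $x(t)\to 0$ for every trajectory starting in $\mathcal{E}(V,1)$. Finally, \eqref{eq:theorem_local_state_bound_constraint_5} gives $V(x(0)) = x(0)^\top P x(0) \leq \beta\,x(0)^\top x(0) \leq \beta(R^*)^2 = 1$ for $x(0)\in\mathcal{R}$, so $\mathcal{R}\subseteq\mathcal{E}(V,1)$ and the chain $\mathcal{R}\subseteq\mathcal{E}(V,1)\subseteq\mathcal{E}(E_k,1)$ follows.

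The step needing the most care is the S-procedure argument: one must check that the single matrix inequality \eqref{eq:theorem_local_state_bound_constraint_1}, imposed globally over $\eta\in\mathbb{R}^{2n}$, correctly certifies negativity of $\dot V$ only on the constrained set $\{\eta = [x^\top,\phi(x)^\top]^\top : x \in \bigcap_k \mathcal{E}(E_k,1)\cap\mathcal{G}_\xi\}$, and in particular that the coupling of the decision variables $\hat\Lambda_k$ and $\Delta_i$ between \eqref{eq:theorem_local_state_bound_constraint_1} and the containment conditions \eqref{eq:theorem_local_state_bound_constraint_2}--\eqref{eq:theorem_local_state_bound_constraint_3} is self-consistent, so that the bounds used to certify $\dot V<0$ are exactly those whose region of validity contains $\mathcal{E}(V,1)$. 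The remaining manipulations are routine.
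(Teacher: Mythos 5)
Your proposal is correct and follows essentially the same route as the paper's own proof: the $\eta^\top(\cdot)\eta$ expansion of $\dot V$ with the S-procedure absorbing the lossless constraint and local quadratic bounds, Proposition~\ref{prop:set_cont} for the containments $\mathcal{E}(V,1)\subseteq\mathcal{E}(E_k,1)$, and \eqref{eq:theorem_local_state_bound_constraint_5} to place $\mathcal{R}$ inside $\mathcal{E}(V,1)$ before invoking invariance. You actually spell out several steps the paper leaves implicit (notably the role of \eqref{eq:theorem_local_state_bound_constraint_3} in certifying the containment in $\mathcal{G}_\xi$), so no gap to report.
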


\begin{remark}
The formulation of Theorem \ref{theorem:local_state_bounds} allows the axes lengths of the ellipsoids for the set containment (incorporated through the matrices $\hat{\Gamma}$ and $\hat{\Delta}$) to be decision variables in the problem. This formulation contrasts with \cite[Algorithm A]{Seiler_2} where the ellipses are fixed at each iteration. However, in order to convexify the problem, the upper bound $\bar{\lambda}$ for the Lagrange multipliers  ${\lambda}$ have to be fixed when variable axes lengths are used, which motivates the following convexification. 
\end{remark}

\begin{algorithm}[t]
	\caption{Compute maximum energy perturbation $R^{*}$}
	\begin{algorithmic}[1]
	\State \textbf{Set-up:}  Obtain $A$ and tolerances $\epsilon$, $\epsilon_{\beta}$. Set $m = 0$
    \State \textbf{Initialisation:}$\left(P^{\{0\}},\lambda^{\{0\}},\beta^{*\{0\}}\right)$ $\leftarrow$ Solve Theorem \ref{algorithm_1}
    with  \eqref{eq:theorem_local_state_bound_constraint_6} replaced by \eqref{eq:lambda_mod}
     \State $\bar{\lambda}^{\{0\}}=\frac{\lambda^{\{0\}}}{\beta^{*\{0\}}}\max\left(\text{eig}(P^{\{0\}})\right)$
     \State \textbf{Define:} Directions $v^{\{j\}}\in \mathbb{R}^{n+K}$ for $j = 1, \,\dots, \, n_v$
     \State Set $\beta^{*\{0\}} = 0$ and $\omega^{\{0\}} = \bar{\lambda}^{\{0\}}$
 \While{$\beta^{*\{m+1\}} - \beta^{*\{m\}}\geq \epsilon_{\beta}$ }
    % \For{}
    \For{$j = 1,\, \ldots, \, n_v$}
               \State Solve Theorem \ref{algorithm_1} with $\bar{\lambda} = \bar{\omega}^{\{j\}}$
               \If {Theorem~\ref{algorithm_1} is feasible} 
                 \State $\bar{\omega}^{\{j\}}=\bar{\lambda}^{\{m\}} + \theta^{\{m\}}\beta^{*\{m\}}v^{\{j\}}$,
                \Else
                 \State $\bar{\omega}^{\{j\}}=\bar{\lambda}^{\{m\}} + \theta^{\{m\}}\alpha^{\{m\}}v^{\{j\}}$,
               \EndIf
    \EndFor
        \State Set $\bar{\lambda}^{\{m+1\}}=\bar{\omega}^{\{j\}}$ that gives minimum $\beta^{*\{m,j\}}$
        \State Set $\beta^{*\{m+1\}} = \min_{j = 1,\, \ldots, \, n_v}\beta^{*\{m,j\}}$
        \State $m\leftarrow m+1$
    \EndWhile
    \State $R^{*} = 1/\sqrt{\beta^{*\{m\}}}$
    \end{algorithmic}
\label{algorithm_1}
\end{algorithm}

\subsection{Convexification of Theorem \ref{theorem:local_state_bounds}}
To pose Theorem \ref{theorem:local_state_bounds} as a convex optimisation, the upper bounds of the Lagrange multipliers $\bar{\lambda}$ have to be fixed. This restriction motivates the use of an iterative algorithm to refine the choice of $\bar{\lambda}$. In the following, an initialisation and update rule for $\bar{\lambda}$ is proposed which is then embedded within Algorithm \ref{algorithm_1} to iteratively generate new bounds $R^*$ and help reduce the conservatism of the approach.

\textbf{Initialisation of $ \bar{\lambda} $:} One way to initialise $\bar{\lambda}^{\{0\}}$ in Algorithm \ref{algorithm_1} is to first solve Theorem \ref{theorem:local_state_bounds} except with \eqref{eq:theorem_local_state_bound_constraint_6} replaced by 
\begin{align} \label{eq:lambda_mod}
    \begin{bmatrix}
    \lambda_k & \bar{\lambda}_k^{1/2} \\ \bar{\lambda}_k^{1/2}  & 1
    \end{bmatrix} \succeq 0, \quad \forall k = 1, \,\dots, \, K+n.
\end{align}
The above enforces $ {\lambda} \geq  \bar{\lambda}>0$ instead of the upper bound of \eqref{eq:theorem_local_state_bound_constraint_6}. The reason for replacing \eqref{eq:theorem_local_state_bound_constraint_6} with \eqref{eq:lambda_mod} in this modified version of Theorem \ref{theorem:local_state_bounds} is because $\bar{\lambda}^{1/2}$ can then be defined as a matrix variable to be searched over in Step 10 of the algorithm, giving flexibility. It is stressed though that this formulation of the problem can only be used to initialise $\bar{\lambda}^{\{0\}}$, as it does not generate stability certificates as \eqref{eq:theorem_local_state_bound_constraint_6}  would not hold.

\textbf{Update of $ \bar{\lambda} $:}
Step 10 in Algorithm \ref{algorithm_1} updates the upper bounds for the Lagrange multipliers $\bar{\lambda}^{\{0\}}$ to reduce the conservatism. In this step, candidate values for $\bar{\lambda}^{\{0\}}$ are proposed by stepping a distance $\theta \in \mathbb{R}$ in a direction  $v^{\{j\}}$, which are both defined before the inner loop on $j$. In this work, the directions $v$ were set to be all (normalised) combinations of the basis vectors of dimension $n+ K$ and their opposite directions, for instance
\begin{subequations}\begin{align}
    v^{\{1\}} &= \begin{bmatrix}   1, & 0, & \dots\,, & 0 \end{bmatrix}^\top, \\
     v^{\{2\}} &= \begin{bmatrix}   0, & 1, & \dots\,, & 0 \end{bmatrix}^\top, \\
     v^{\{3\}} &= \begin{bmatrix}   1, & 1, & \dots\,, & 0 \end{bmatrix}^\top/\sqrt{2}, \\
     v^{\{4\}} &=\begin{bmatrix}   -1, & -1, & \dots\,, & 0 \end{bmatrix}^\top/\sqrt{2}, 
\end{align}\end{subequations}
and so on for $j = 1\, \dots, \, n_v$, with the step lengths $\theta^{\{m\}} = 1$ and $\alpha^{\{m\}}=10^{4}$. The algorithm then  takes the value of $\bar{\lambda}^{\{m\}}$ which gave the biggest increase in $\beta^* = 1/R^2$, and then continues onto the next iterate.

\section{Numerical Results}
\label{sec:numerical_examples}

 \begin{table*}
\centering
\begin{scriptsize}
\begin{tabular}{|c|c|c|c|c|c|c|c|c|c|c|}
\hline 
 & \multicolumn{5}{c|}{\textbf{WKH model}}  & \multicolumn{5}{c|}{\textbf{9-state model}} \\ \hline
\textbf{Reynolds number ($Re$)} & \textbf{5} & \textbf{10} & \textbf{15} & \textbf{20} & \textbf{25}
 & \textbf{100} & \textbf{125} & \textbf{150} & \textbf{175} & \textbf{200}\\ \hline
\textbf{Methodology} & \multicolumn{10}{c|}{\textbf{Maximum energy perturbation ($R^*$)}} \\ \hline
{\color[HTML]{000000} Upper Limit (Simulation)} & 0.221 & 0.0480 & 0.0215 & 0.0124 & 0.0081  & // & // & // & // & // \\ \hline
Algorithm 1 & 0.120 & 0.0282 & 0.0121 & 0.0076 & 0.0049 & 0.0031 & 0.0019 & 0.0013 & 0.0011 & 0.0009  \\ \hline
Kalur, Mushtaq, Seiler \& Hemati & 0.100 & 0.0232 & 0.0100 & 0.0054 & 0.0034 &  0.0024 & 0.0015 & 0.0010 & 0.0007 & 0.0006  \\ \hline
Liu \& Gayme & 0.0383 & 0.0063 & 0.0021 & 0.0010 & 0.0006 & 0.0010 & 0.0006 & 0.0004 & 0.0003 & 0.0002 \\ \hline
\end{tabular}
\end{scriptsize}
\caption{Maximum energy perturbations $R^{*}$ of both the WKH and the 9-state Couette flow models.}
\label{table:results}
\end{table*}

\begin{figure}
      \centering
      \includegraphics[width=0.48\textwidth]{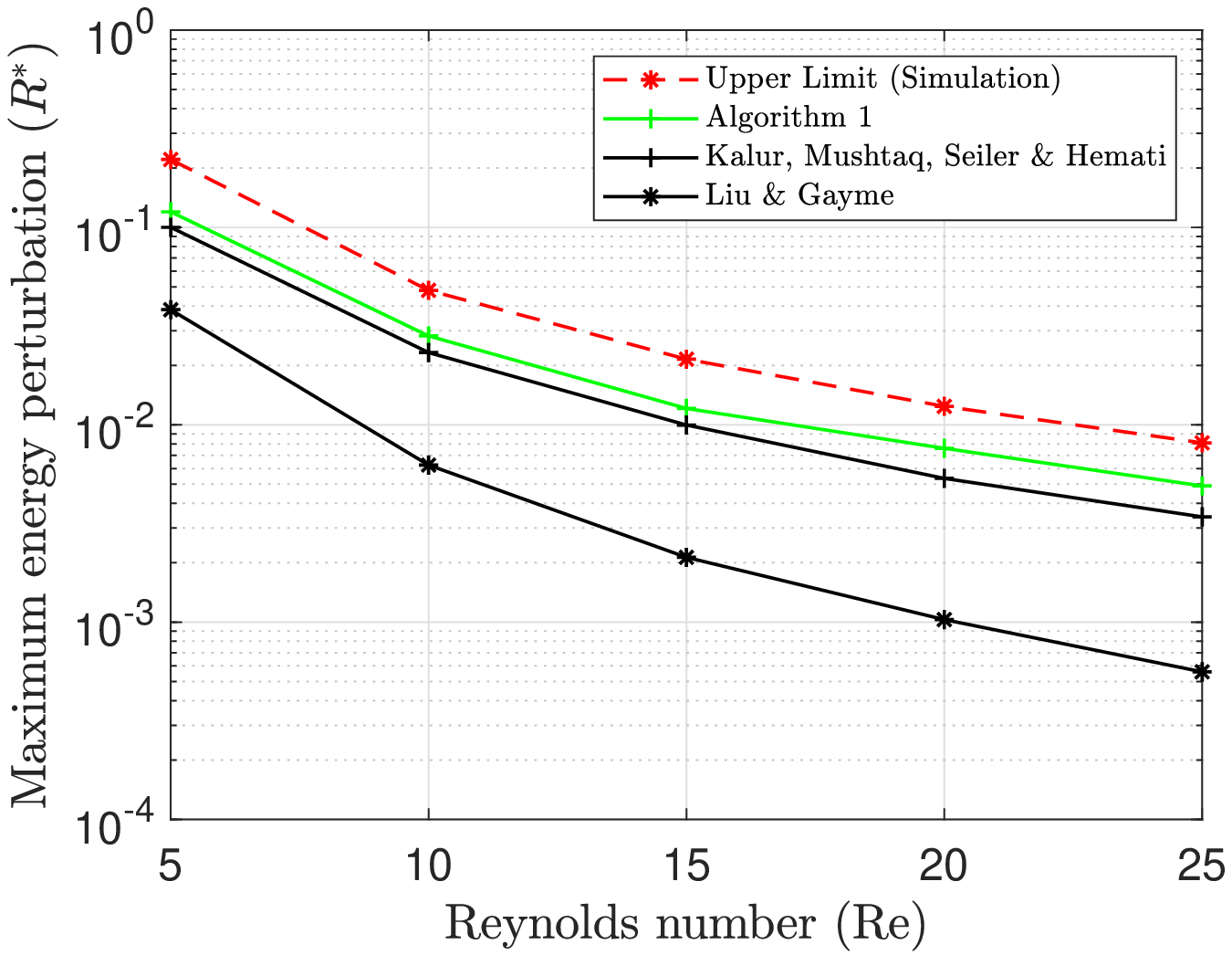}
      \caption{\textit{Comparison between \cite{Seiler_2}, \cite{Dennice} and Algorithm \ref{algorithm_1} for bounding the maximum energy perturbations $R^{*}$ of the WKH shear flow model.}}
      \label{fig:WKH_inner_estimates}
 \end{figure}
Numerical examples are now shown to evaluate the performance of the Algorithm \ref{algorithm_1} in computing inner estimates of the region of attraction for the four and nine state models. For both models, the maximal achievable energy perturbation obtained using Theorem \ref{theorem:local_state_bounds} was compared against \cite{Dennice} and \cite{Seiler_2} as well as an upper limit produced by simulating the system for different initial conditions $x_0$'s within a neighbourhood of its origin. For both examples, the toolbox CVX \cite{cvx} along with the solver MOSEK were used to generate the results presented below with tolerances $\epsilon=\epsilon_{\beta}=10^{-6}$.  The complete set of numerical results are detailed in Table \ref{table:results}.

\subsection{Waleffe-Kim-Hamilton (WKH) shear flow model}
 Figure \ref{fig:WKH_inner_estimates} shows the comparison for the WKH model described in Section~\ref{sec:transitional_flow_models}. For this example,  the WKH model's parameters were set to  $\sigma = \lambda  = \ldots= \delta= 1$. It is noted that other parameter values have been used for this model, notably in \cite{Waleffe}, but the choice of unity was selected to enable a direct comparison to the results of \cite{Seiler_1}. The maximal energy perturbation $R^{*}$ was found for Reynolds numbers (Re) in the range $[5:5:25]$.

 This figure shows the maximal energy perturbations allowed for both Algorithm \ref{algorithm_1} (green), \cite{Seiler_1}, \cite{Dennice} (black), and the upper limit found through system's simulations (red). The results presented in Table \ref{table:results} highlight how local quadratic bounds and the flexible computation of the ellipsoidal sets in Theorem \ref{theorem:local_state_bounds} have significantly reduced the conservatism, meaning that a higher energy perturbation is allowed. Specifically, the average improvement  over \cite{Seiler_1} for the five Reynolds numbers was $\approx 29\%$, while the average improvement over \cite{Dennice}  for the five Reynolds numbers was $\approx 482\%$.

  Figure~\ref{fig:convergence} compares the convergence rate of Algorithm \ref{algorithm_1} against \cite[Algorithm A]{Seiler_1} for the WKH model with a Reynolds number of Re $=10$, showing how Algorithm \ref{algorithm_1} required solving fewer optimisation problems to converge on its final value of $R^*$.

 \begin{figure}
       \centering
       \includegraphics[scale=0.53]{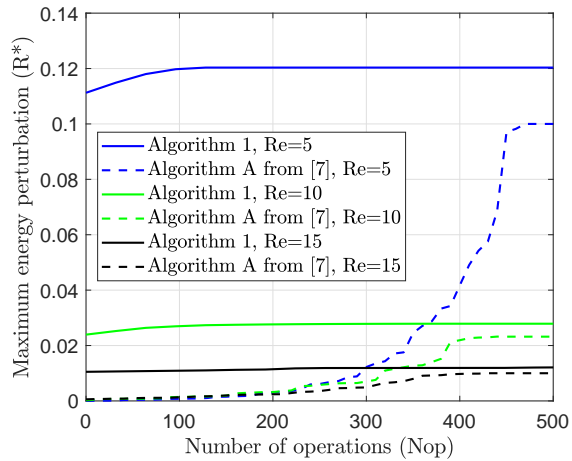}
      \caption{\textit{Comparison between the convergence rates of Algorithm \ref{algorithm_1} and \cite[Algorithm A]{Seiler_1} for bounding the maximum energy perturbations $R^{*}$ of the WKH model with different Reynolds numbers. $N_\text{op}$ corresponds to the number of optimisation problems solved to reach convergence.}}
       \label{fig:convergence}
  \end{figure}

\subsection{9-state reduced-order model of Couette flow}

A flow domain of $L_b=1.75\pi$ and $L_d=1.2\pi$ was defined for the numerical evaluation of the 9-state reduced-order model of Couette flow. Figure \ref{fig:9_state_inner_estimates} compares the maximum achievable energy perturbation $R^{*}$ for which stability could be verified, comparing Algorithm  \ref{algorithm_1} (green) against \cite{Seiler_1} and  \cite{Dennice} (black). Unlike for the WKH model, no upper limit could be found for this model from numerical simulations. The benefits of Theorem \ref{theorem:local_state_bounds} were more striking for this model compared against the WKH model, with the improvement averaged across the Reynolds' numbers Re $= [100:25:200]$ being $\approx 38\%$ over \cite{Seiler_1} and $\approx 253\%$ over \cite{Dennice}.

\begin{figure}[t]
      \centering
      \includegraphics[width=0.5\textwidth]{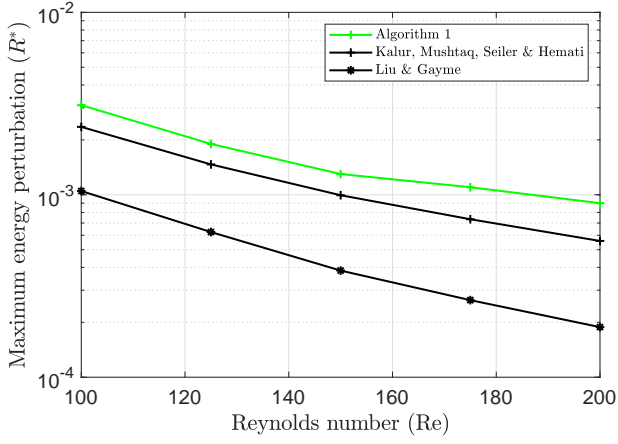}
     \caption{\textit{Comparison between \cite{Seiler_1}, \cite{Dennice} and Algorithm \ref{algorithm_1} for bounding the maximum energy perturbations $R^{*}$ of the 9-state Couette flow model.}}
      \label{fig:9_state_inner_estimates}
 \end{figure}

%%%%%%%%%%%%%%%%%%%%%%%%%%%%%%%%%%%%%%%%%%%%%%%%%%%%%%%%%%%%%%
\section{Conclusions}
\label{sec:conclusion}

The regional stability analysis of transitional fluid flow models was considered. By exploiting the fact that the nonlinearities of these models are lossless and can be locally bounded, a method was proposed to determine the maximum energy perturbation in the flow field for which stability could be guaranteed. To illustrate the potential of the proposed method, numerical examples were demonstrated for both the WKH model and a 9-state model for Couette flow, showing reduced conservatism over the current state-of-the-art without sacrificing on computational efficiency. Future work will explore adapting the method for generic candidate Lyapunov function structures and developing more effective ways to convexify the problem than the proposed method involving $\bar{\lambda}.$

\section*{Acknowledgements}

The authors gratefully acknowledge EDF Energy, UK and the University of Oxford for supporting this research through a French internship scheme (sponsor license number UED4UGNF1). Ross Drummond was funded through a UKIC Fellowship from the Royal Academy of Engineering.

\bibliography{bibliography.bib}

% Generated by IEEEtran.bst, version: 1.14 (2015/08/26)
\begin{thebibliography}{10}
\providecommand{\url}[1]{#1}
\csname url@samestyle\endcsname
\providecommand{\newblock}{\relax}
\providecommand{\bibinfo}[2]{#2}
\providecommand{\BIBentrySTDinterwordspacing}{\spaceskip=0pt\relax}
\providecommand{\BIBentryALTinterwordstretchfactor}{4}
\providecommand{\BIBentryALTinterwordspacing}{\spaceskip=\fontdimen2\font plus
\BIBentryALTinterwordstretchfactor\fontdimen3\font minus
  \fontdimen4\font\relax}
\providecommand{\BIBforeignlanguage}[2]{{%
\expandafter\ifx\csname l@#1\endcsname\relax
\typeout{** WARNING: IEEEtran.bst: No hyphenation pattern has been}%
\typeout{** loaded for the language `#1'. Using the pattern for}%
\typeout{** the default language instead.}%
\else
\language=\csname l@#1\endcsname
\fi
#2}}
\providecommand{\BIBdecl}{\relax}
\BIBdecl

\bibitem{Reynolds}
O.~Reynolds, ``{An experimental investigation of the circumstances which
  determine whether the motion of water in parallel channels shall be direct or
  sinuous and of the law of resistance in parallel channels},'' \emph{Philos.
  Trans. R. Soc.}, vol.~82, 1883.

\bibitem{Waleffe}
F.~Waleffe, ``Transition in shear flows. nonlinear normality versus nonnormal
  linearity,'' \emph{Physics of Fluids}, vol.~7, no.~12, p. {3060–3066},
  1995.

\bibitem{Waleffe_Fabian}
F.~Waleffe, J.~Kim, and J.~M. Hamilton, ``{On the Origin of Streaks in
  Turbulent Shear Flows},'' pp. 37--49, 1993.

\bibitem{Moehlis}
J.~Moehlis, H.~Faisst, and B.~Eckhardt, ``A low-dimensional model for turbulent
  shear flows,'' vol.~6, pp. 56--56, may 2004.

\bibitem{trefethen1}
J.~S. Baggett, T.~A. Driscoll, and L.~N. Trefethen, ``A mostly linear model of
  transition to turbulence,'' \emph{Physics of Fluids}, vol.~7, no.~4, pp.
  833--838, 1995.

\bibitem{trefethen2}
L.~N. Trefethen, A.~E. Trefethen, S.~C. Reddy, and T.~A. Driscoll,
  ``Hydrodynamic stability without eigenvalues,'' \emph{Science}, vol. 261, no.
  5121, pp. 578--584, 1993.

\bibitem{Seiler_1}
A.~Kalur, T.~Mushtaq, P.~Seiler, and M.~S. Hemati, ``{Estimating Regions of
  Attraction for Transitional Flows Using Quadratic Constraints},'' \emph{IEEE
  Control Systems Letters}, vol.~6, pp. 482--487, 2022.

\bibitem{Seiler_2}
A.~Kalur, P.~Seiler, and M.~S. Hemati, ``{Nonlinear stability analysis of
  transitional flows using quadratic constraints},'' \emph{Phys. Rev. Fluids},
  vol.~6, p. 044401, Apr 2021.

\bibitem{Dennice}
C.~Liu and D.~F. Gayme, ``{Input-output inspired method for permissible
  perturbation amplitude of transitional wall-bounded shear flows},''
  \emph{Phys. Rev. E}, vol. 102, p. 063108, Dec 2020.

\bibitem{Khalil}
H.~K. Khalil, ``{Nonlinear systems},'' \emph{3rd ed. Prentice-Hall}, 2002.

\bibitem{Kerswell}
R.~Kerswell, ``{Nonlinear Nonmodal Stability Theory},'' \emph{Annual Review of
  Fluid Mechanics}, vol.~50, no.~1, pp. 319--345, 2018.

\bibitem{Goulart}
P.~J. Goulart and S.~Chernyshenko, ``{Global stability analysis of fluid flows
  using sum-of-squares},'' \emph{Physica D}, vol. 241, pp. 692--704, 2012.

\bibitem{Valmorbida2}
G.~Valmorbida, R.~Drummond, and S.~R. Duncan, ``{Regional Analysis of
  Slope-Restricted Lurie Systems},'' \emph{IEEE Transactions on Automatic
  Control}, vol.~64, no.~3, pp. 1201--1208, 2019.

\bibitem{Fabian_Waleffe}
F.~Waleffe, ``On a self-sustaining process in shear flows,'' \emph{Physics of
  Fluids}, vol.~9, no.~4, pp. 883--900, 1997.

\bibitem{Holmes}
P.~Holmes, J.~L. Lumley, and G.~Berkooz, ``{Turbulence, Coherent Structures,
  Dynamical Systems and Symmetry},'' \emph{Cambridge University Press}, 1996.

\bibitem{Schmid}
P.~J. Schmid and D.~S. Henningson, ``{Stability and transition in shear
  flows},'' \emph{Springer}, 2001.

\bibitem{Leray}
J.~Leray, ``Sur le mouvement d'un liquide visqueux emplissant l'espace,''
  \emph{Acta Mathematica}, vol.~63, 1934.

\bibitem{Aizerman}
M.~A. Aizerman and F.~R. Gantmacher, ``Absolute stability of regulator
  systems,'' \emph{Holden-Day, CA}, 1964.

\bibitem{cvx}
M.~Grant and S.~Boyd, ``{CVX}: Matlab software for disciplined convex
  programming, version 2.1,'' \url{http://cvxr.com/cvx}, Mar. 2014.

\end{thebibliography}
\bibliographystyle{IEEEtran}

\section*{Appendix-9-state reduced-order model of Couette flow \cite{Holmes}}

\begin{subequations}
\begin{footnotesize}
\label{eq:9-state_model}
\allowdisplaybreaks
\begin{align}
    \dot{x}_1&=\frac{\beta^{2}}{R e}-\frac{\beta^{2}}{R e} x_{1}-\sqrt{\frac{3}{2}} \frac{\beta \gamma}{\kappa_{\alpha \beta \gamma}} x_{6} x_{8}+\sqrt{\frac{3}{2}} \frac{\beta \gamma}{\kappa_{\beta \gamma}} x_{2} x_{3},\\
    \dot{x}_2&=-\left(\frac{4 \beta^{2}}{3}+\gamma^{2}\right) \frac{x_{2}}{\operatorname{Re}}+\frac{5 \sqrt{2}}{3 \sqrt{3}} \frac{\gamma^{2}}{\kappa_{\alpha \gamma}} x_{4} x_{6}-\frac{\gamma^{2}}{\sqrt{6} \kappa_{\alpha \gamma}} x_{5} x_{7}\nonumber\\ &-\frac{\alpha \beta \gamma}{\sqrt{6} \kappa_{\alpha \gamma} \kappa_{\alpha \beta \gamma}} x_{5} x_{8}-\sqrt{\frac{3}{2}} \frac{\beta \gamma}{\kappa_{\beta \gamma}} x_{1} x_{3}-\sqrt{\frac{3}{2}} \frac{\beta \gamma}{\kappa_{\beta \gamma}} x_{3} x_{9},\\
    \dot{x}_3&=-\frac{\beta^{2}+\gamma^{2}}{\operatorname{Re}} x_{3}+\frac{2}{\sqrt{6}} \frac{\alpha \beta \gamma}{\kappa_{\alpha \gamma} \kappa_{\beta \gamma}}\left(x_{4} x_{7}+x_{5} x_{6}\right)\nonumber\\
        &+\frac{\beta^{2}\left(3 \alpha^{2}+\gamma^{2}\right)-3 \gamma^{2}\left(\alpha^{2}+\gamma^{2}\right)}{\sqrt{6} \kappa_{\alpha \gamma} \kappa_{\beta \gamma} \kappa_{\alpha \beta \gamma}} x_{4} x_{8}, \\
    \dot{x}_4&=-\frac{3 \alpha^{2}+4 \beta^{2}}{3 R e} x_{4}-\frac{\alpha}{\sqrt{6}} x_{1} x_{5}-\frac{10}{3 \sqrt{6}} \frac{\alpha^{2}}{\kappa_{\alpha \gamma}} x_{2} x_{6}\nonumber\\
          &-\sqrt{\frac{3}{2}} \frac{\alpha \beta \gamma}{\kappa_{\alpha \gamma} \kappa_{\beta \gamma}} x_{3} x_{7}-\sqrt{\frac{3}{2}} \frac{\alpha^{2} \beta^{2}}{\kappa_{\alpha \gamma} \kappa_{\beta \gamma} \kappa_{\alpha \beta \gamma}} x_{3} x_{8}-\frac{\alpha}{\sqrt{6}} x_{5} x_{9},\\
\allowdisplaybreaks
 \dot{x}_5&=-\frac{\alpha^{2}+\beta^{2}}{\operatorname{Re}} x_{5}+\frac{\alpha}{\sqrt{6}} x_{1} x_{4}+\frac{\alpha^{2}}{\sqrt{6} \kappa_{\alpha \gamma}} x_{2} x_{7}\nonumber\\ 
           &-\frac{\alpha \beta \gamma}{\sqrt{6} \kappa_{\alpha \gamma} \kappa_{\alpha \beta \gamma}} x_{2} x_{8}+\frac{\alpha}{\sqrt{6}} x_{4} x_{9}+\frac{2}{\sqrt{6}} \frac{\alpha \beta \gamma}{\kappa_{\alpha \gamma} \kappa_{\beta \gamma}} x_{3} x_{6},\\
  \dot{x}_6&=-\frac{3 \alpha^{2}+4 \beta^{2}+3 \gamma^{2}}{3 R e} x_{6}+\frac{\alpha}{\sqrt{6}} x_{1} x_{7}+\sqrt{\frac{3}{2}} \frac{\beta \gamma}{\kappa_{\alpha \beta \gamma}} x_{1} x_{8},\\
    \dot{x}_7&=-\frac{\alpha^{2}+\beta^{2}+\gamma^{2}}{R e} x_{7}-\frac{\alpha}{\sqrt{6}}\left(x_{1} x_{6}+x_{6} x_{9}\right)\nonumber\\
            &+\frac{1}{\sqrt{6}} \frac{\gamma^{2}-\alpha^{2}}{\kappa_{\alpha \gamma}} x_{2} x_{5}+\frac{1}{\sqrt{6}} \frac{\alpha \beta \gamma}{\kappa_{\alpha \gamma} \kappa_{\beta \gamma}} x_{3} x_{4},\\
  \dot{x}_8&=-\frac{\alpha^{2}+\beta^{2}+\gamma^{2}}{R e} x_{8}+\frac{2}{\sqrt{6}} \frac{\alpha \beta \gamma}{\kappa_{\alpha \gamma} \kappa_{\alpha \beta \gamma}} x_{2} x_{5}\nonumber\\
            &+\frac{\gamma^{2}\left(3 \alpha^{2}-\beta^{2}+3 \gamma^{2}\right)}{\sqrt{6} \kappa_{\alpha \gamma} \kappa_{\beta \gamma} \kappa_{\alpha \beta \gamma}} x_{3} x_{4},\\
   \dot{x}_9&=-\frac{9 \beta^{2}}{\operatorname{Re}} x_{9}+\sqrt{\frac{3}{2}} \frac{\beta \gamma}{\kappa_{\beta \gamma}} x_{2} a_{3}-\sqrt{\frac{3}{2}} \frac{\beta \gamma}{\kappa_{\alpha \beta \gamma}} x_{6} x_{8}. 
\end{align}
\end{footnotesize}
\end{subequations}
\noindent where $\alpha=\frac{2\pi}{L_x}$, $\beta=\frac{\pi}{2}$, $\gamma=\frac{2\pi}{Lz}$, $\kappa_{\alpha \gamma}=\sqrt{\alpha^{2}+\gamma^{2}}$, $\kappa_{\beta \gamma}=\sqrt{\beta^{2}+\gamma^{2}}$ and $\kappa_{\alpha \beta \gamma}=\sqrt{\alpha^{2}+\beta^{2}+\gamma^{2}}$.

\end{document}